\definecolor{myRed}{rgb}{1,0,0}
\definecolor{myBlue}{rgb}{0,0,1}
\definecolor{myYellow}{rgb}{0,1,0}
\tikzstyle{red}=[thick, myRed, opacity=1]
\tikzstyle{blue}=[thick, myBlue, opacity=1]
\tikzstyle{yellow}=[ thick, myYellow, opacity=1]
\tikzset{edge/.style = {->,> = latex'}}
\newcommand{\path}{\vec{p}}
\newcolumntype{L}{>{\arraybackslash}m{3cm}}
\newcommand{\normal}{\mathcal{N}}
\renewcommand{\P}{\mathbb{P}}
\newcommand{\E}{\mathbb{E}}
\newcommand{\V}{\mathbb{V}}
\newcommand{\N}{\mathbb{N}}
\newcommand{\R}{\mathbb{R}}
\newcommand{\1}{\boldsymbol{1}}
\renewcommand{\ij}{(i,j)}
\newcommand{\rP}{\vec{\mathcal{P}}}
\def\var#1{\mathbb{V}\left[#1\right]}
\newtheorem{thm}{Theorem}
\newtheorem{ex}[thm]{Example}
\newtheorem{ass}[thm]{Assumption}
\newtheorem{remark}{Remark}
\title{Quality analysis in acyclic production networks} 
\author
{Abraham Gutierrez, Sebastian M\"uller
\\
\today
}
\date{}
\begin{document}


\maketitle

%
\begin{abstract}
The production network under examination consists of a number of workstations. Each workstation is a parallel configuration of machines performing the same kind of tasks on a given part. Parts move from one workstation to another and at each workstation a part is assigned randomly to a machine. We assume that the production network is \emph{acyclic}, that is, a part does not return to a workstation where it previously received service. Furthermore,  we assume that the quality of the end product is \emph{additive}, that is, the sum of the quality contributions of the machines along the production path. The contribution of each machine  is modeled by a separate random variable. 

Our main result is the construction of estimators that allow pairwise and multiple comparison of the means and variances of machines in the same workstation. These comparisons then may lead to the identification of unreliable machines.  We also discuss the asymptotic distributions of the estimators that allow the use of standard statistical tests and decision making.

{\bf Keywords}: direct acyclic graphs, production networks,  quality estimation, anomaly detection, variability

{\bf AMS MSC 2010}: 90B30, 90B15, 62M02, 62M05

\end{abstract}


\section{Introduction}

In order to maintain  competitiveness, industrial manufacturers have to pursue new sources to enhance process agility. In parallel production networks, the quality of the end product depends on various intermediate production steps. Therefore,  a recent  challenge is to achieve a level of visibility into the production flows that allow to optimize throughput by guaranteeing at the same time given quality standards. The challenge to maintain a visibility across all parallel workflows and  to identify eventual sources of errors becomes particularly difficult in situations where the qualities of single machines are not observable.  

\begin{figure}
\centering
\tikzset{
node_style/.style={circle,draw=black,fill=black!20!, minimum size=0pt},
node_stylethick/.style={circle,draw=black,fill=black, minimum size=0pt},
edge_style/.style={draw=black,  thin, ->},
edge_stylethick/.style={draw=black,  ultra thick, ->},
label/.style={sloped,above}
}
\begin{tikzpicture}[shorten >=1pt,
   auto,
   node distance=2cm and 10cm,
   scale=0.6]

   \draw [rounded corners=5, fill=black!10!] (2.5,-7) rectangle ++(1,8) node [midway] {};
   \node[text width=1cm] at (3.15,1.3) {WS1};
    \node[text width=1cm] at (6.15,1.3) {WS2};
     \node[text width=1cm] at (9.15,1.3) {WS3};
      \node[text width=1cm] at (12.15,1.3) {WS4};
   
   \draw [rounded corners=5, fill=black!10!] (5.5,-7) rectangle ++(1,8) node [midway] {};

   \draw [rounded corners=5, fill=black!10!] (8.5,-7) rectangle ++(1,8) node [midway] {};

   \draw [rounded corners=5, fill=black!10!] (11.5,-7) rectangle ++(1,8) node [midway] {};
   
    \node[node_stylethick] (s) at (0,-3) {};
     \foreach \y/\t in {1/1, 2/1, 3/1, 4/1}{
     \node[node_style]  (v\y) at (3, -2*\y+2) {};
     \draw [edge_style]  (s) edge  (v\y);}
       
       \draw [edge_stylethick]  (s) edge  (v1);
 \node[node_stylethick] (v1) at (3,0) {};
       
     \foreach \y/\t in {1/2, 2/2, 3/2}
     \node[node_style]  (w\y) at (6, -2*\y+1) {};
     \foreach  \y in {1,2,3,4} 
        \foreach  \w in {1,2,3} 
           \draw [edge_style]  (v\y) edge  (w\w);
           
            \draw [edge_stylethick]  (v1) edge  (w3);
            \node[node_stylethick] (w3) at (6,-5) {};
           
     \foreach \y/\t in {1/3, 2/3}
     \node[node_style]  (x\y) at (9, -2*\y) {};
     \foreach  \y in {1,2,3} 
        \foreach  \w in {1,2} 
           \draw [edge_style]  (w\y) edge  (x\w);
           
            \draw [edge_stylethick]  (w3) edge  (x1);
           \node[node_stylethick] (x1) at (9,-2) {};
           
      \foreach \y/\t in {1/4, 2/4, 3/4, 4/4}
     \node[node_style]  (y\y) at (12, -2*\y+2) {};
     \foreach  \y in {1,2} 
        \foreach  \w in {1,2,3,4} 
           \draw [edge_style]  (x\y) edge  (y\w);
           
            \draw [edge_stylethick]  (x1) edge  (y4);
            \node[node_stylethick] (y4) at (12,-6) {};
      
      \node[node_stylethick] (t) at (15,-3) {$$};     
           \foreach  \w in {1,2,3,4} 
            \draw [edge_style]   (y\w) edge (t);
            
            \draw [edge_stylethick]  (y4) edge  (t);
    \end{tikzpicture}
\caption{An illustration of a path passing through different workstations in a production network.}\label{fig:path}
\end{figure}
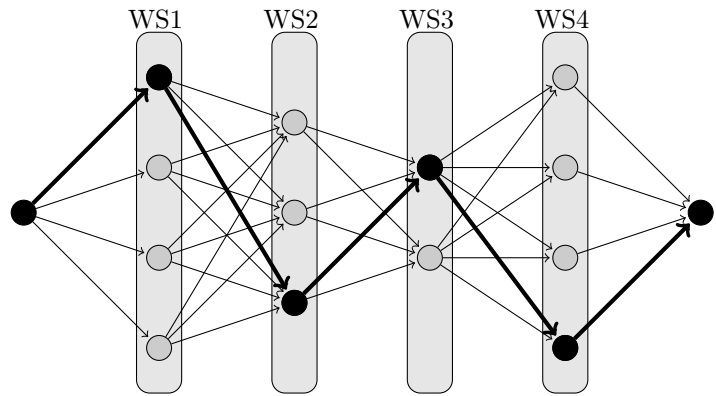

In this note we discuss a new approach allowing to compare the impact of different machines of the same workstation, i.e.~performing the same task in parallel, on the quality of the  product. We assume that the quality of the product is only observable at the end of the process and that the  qualities of the nodes along the path are latent variables. See Figure \ref{fig:path} for an illustration.
Although the main interest is usually in the process mean, the size of the process variability is often crucial.  Our approach allows to model the process mean and process variability in parallel under very general conditions.  In particular, the only condition on the production network is that it is acyclic, i.e.~a part does not return to a workstation where it previously received service. These kind of networks are modeled in mathematics and computer science as  directed acyclic graphs (DAGs). 

The concept of DAGs is becoming increasingly important having various applications in different fields of science and engineering. For instance DAGs have many applications not only in  modeling  production networks, e.g.~\cite{HaLe:89},  but also in process networks and  distributed computing, e.g.~\cite{FeRuSch:97},	scheduling for systems of tasks, e.g.~\cite{Sk:08},  DAG-networks for deep learning, e.g.~\cite{Sz:15}, and DAG-based alternatives of the blockchain technology, e.g.~\cite{Po:18}. Moreover, directed acyclic graphs (DAGs) are more and more often used to represent  causal relationships among random variables in  complex systems, e.g.~\cite{Pe:09}. We speak in general about production networks, but due to the wide applications of DAGs we want to emphasize here that our approach can be applied to any kind of acyclic network where effects along a path are additive and where measurements can only be performed at the end of the process. 

Parameter estimation is not only important for the decision making process, but it  is also an essential prerequisite in order to obtain meaningful simulations. However, complex production networks depend on a high number of parameters and their estimation is  challenging. Since simulations in manufacturing plays an increasingly important role, e.g.~\cite{ KoPh:01, MoDoBe:14}, we believe that our approach is an interesting contribution to this topic as well.

\subsection*{Previous work}
Estimation for the mean can also be conducted using a linear regression with categorical covariates, e.g.~\cite{MoPeVi:01}. 
However, since in a linear regression homoscedasticity, i.e.~the homogeneity of the variances,  is a crucial condition, this approach naturally does not allow to compare differences of the variances.
In fact, linear regression models are often plagued by different variabilities or heteroskedasticity. We refer to \cite[Chapter 4]{KlZe:08} for an overview on how to detect and control heteroskedasticity. In contrast to these problems, our approach naturally allows different  variabilities  between variables  and is able to detect differences in variability of values of a given variable.  

 Multifactor experimental designs,  \cite{DrPu:96, Se:18}  are also alternatives to estimate the mean differences but also rely on homoscedasticity.  They are mostly used in the context of statistically planned experiments, which consists of a few experimental runs to obtain data on the product characteristics. If the number of observations for each setting (or path in our notation) is sufficiently high and under further conditions described in \cite[Section 4]{DrPu:96} these methods allow a comparison of the variances, too. While this may offer a feasible, however not direct, way to identify differences in variability if the number of machines is small, it seems not practical in more complex networks.

There is also a connection to critical paths analysis, see e.g.~\cite{BoGeWeAr:10, Schu:05}. While these methods allow to find critical paths in acyclic networks they are not suited to compare nor estimate differences in mean and variances of given tasks.

\section{The model}
A directed acyclic graph (\textbf{DAG}) is a finite directed graph with no directed cycles. It consists of a finite vertex set $V$ and a finite set of directed edges $E=\{ (v,w): v,w \in V, v\neq w\}$. In our setting the DAG contains two special vertices: a source $s$ and a sink $t$.  We are interested in the paths from the source to the sink in this graph.  We denote a path $\vec{p}$ in the DAG as $\vec{p} = (p_{0},p_{1},\ldots, p_{c}, p_{c+1})$ where $p_{0}=s$ and $p_{c+1}=t$ and $(p_{i},p_{i+1})\in E$. We define $\vec{p}[j]:=p_{j}$.  We refer to Figure \ref{fig:DAG1} for an illustration and to \cite{BaGu:09} for more details on directed graphs.

We assume that at each step $1\leq i \leq c$ the path $\vec{p}$ has  $r_{i}\leq r$  different choices and the nodes in each column are always numerated starting with $1$.  The possible choices of a path can therefore be modeled through an $r\times c$ matrix.   More precisely, given a path $\vec{p}$, we associate an $r \times c$ binary matrix $V_{\vec{p}}$ that has $1$'s 
only in the nodes visited by the path:
$$
V_{\vec{p}} := (V_{\vec{p}}(i,j))_{i\in[r],\, j\in [c]},
$$
where we denote $[k]:=\{1,2,\ldots, k\}$ for an integer $k$.
We call $V_{\vec{p}}$ the \textbf{indicator matrix} of the path $\vec{p}$.

Each path contains exactly one node of each column.  The aim of this paper is to study differences among nodes of the same columns.
We think of nodes in the same columns as different possibilities for a given task,  as different persons performing the same job,  as different machines in the same workstation or  as  variations of the same kind of treatment.  

\begin{figure}
\centering
\tikzset{
node_style/.style={circle,draw=black,fill=black!20!, minimum size=1cm},
edge_style/.style={draw=black,  thick, ->},
label/.style={sloped,above}
}
\begin{tikzpicture}[shorten >=1pt,
   auto,
   node distance=2cm and 10cm,
   scale=0.9]
    \node[node_style] (s) at (0,-3) {$s$};
     \foreach \y/\t in {1/1, 2/1, 3/1, 4/1}{
     \node[node_style]  (v\y) at (3, -2*\y+2) {(\y,\t)};
     \draw [edge_style]  (s) edge  (v\y);}
       
     \foreach \y/\t in {1/2, 2/2, 3/2}
     \node[node_style]  (w\y) at (6, -2*\y+1) {(\y,\t)};
     \foreach  \y in {1,2,3,4} 
        \foreach  \w in {1,2,3} 
           \draw [edge_style]  (v\y) edge  (w\w);
           
     \foreach \y/\t in {1/3, 2/3}
     \node[node_style]  (x\y) at (9, -2*\y) {(\y,\t)};
     \foreach  \y in {1,2,3} 
        \foreach  \w in {1,2} 
           \draw [edge_style]  (w\y) edge  (x\w);
           
      \foreach \y/\t in {1/4, 2/4, 3/4, 4/4}
     \node[node_style]  (y\y) at (12, -2*\y+2) {(\y,\t)};
     \foreach  \y in {1,2} 
        \foreach  \w in {1,2,3,4} 
           \draw [edge_style]  (x\y) edge  (y\w);
      
      \node[node_style] (t) at (15,-3) {$t$};     
           \foreach  \w in {1,2,3,4} 
            \draw [edge_style]   (y\w) edge (t);
    \end{tikzpicture}
\caption{An illustration of a DAG with $c=4$ and  $r_{1}=4, r_{2}=3, r_{3}=2$ and $r_{4}=4$. Every node in column $i$ has outgoing edges to every node in column $i+1,~ i=1,\ldots, c-1$.}\label{fig:DAG1}
\end{figure}
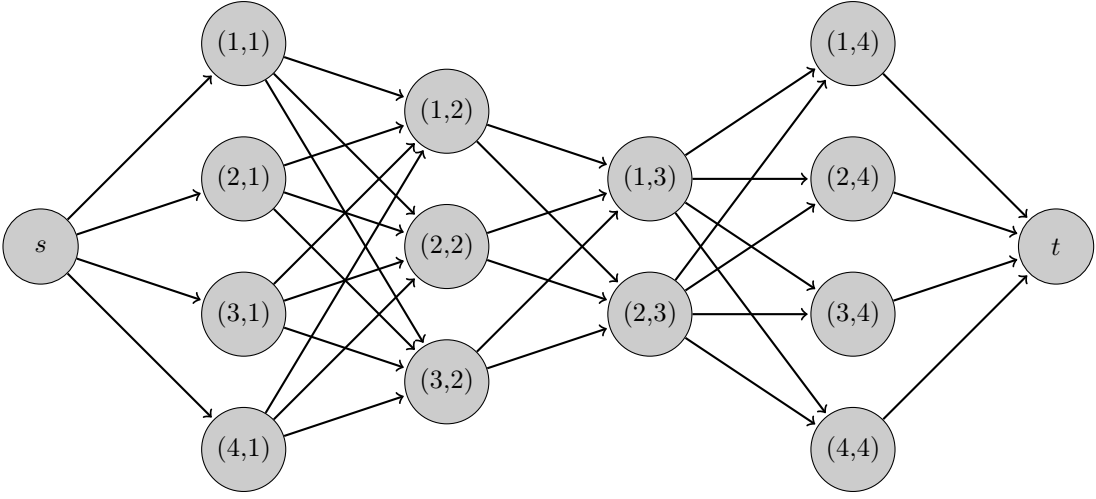

The given data consists of the list of 
paths $\{ \vec{p}_{i} \}_{i = 1,\ldots, n}$ in the DAG and the list of outputs $\{ b(\vec{p}_{i}) \}_{i = 1,\ldots, n}$ for each path.

We consider the {\bf quality matrix} $S$, which is a random matrix of size $r\times c$ with real entries
$$
S := ( s(i,j) )_{i\in[r],\, j\in [c]},
\qquad
s(i,j) \in \mathbb{R}.
$$
We model the paths with a random vector $\rP := (P_{1}, \ldots, P_{c})$ 
where the components $P_{i}$ are random variables over the set $[r]$. 

Throughout the paper we work under the following standing assumptions:
\begin{ass} We assume that:
\begin{enumerate}
 \item   all entries of $S$ have finite second moments; 
 \item  the random variables $S\ij, i\in[r],\, j\in [c]$ are (jointly) independent;
 \item the paths $\rP_{1}, \rP_{2}, \ldots$ are chosen independently and uniformly.
\end{enumerate}
\end{ass}
\noindent Note that,  we do not assume the entries of $S$ to be identically distributed nor having the same variance.

Let $\vec{p} = (p_{0},p_{1},\ldots, p_{c}, p_{c+1})$ be a realization of $\rP$, where $p_{0}=s$ and $p_{c+1}=t$ almost surely.
Then, the {\bf quality of the construction path} $\vec{p}$ is defined as
$$
b(\vec{p})= \sum_{j= 1}^{c} S(p_{j}, j).
$$
We also can think of $b(\vec{p})$ as the quality (or error) cumulated along the path $\vec{p}$.

Let us make precise where the randomness enters in our model. We choose a random path $\rP$ and random matrix $S$. The corresponding probability measure is denoted by $\P$.
 The random choice of $\rP$ and $S$ induces a random variable $b(\rP)$ and allows to generate a sequence of  i.i.d.~random variables $(\rP_{1}, b(\rP_{1})), (\rP_{2}, b(\rP_{2})),\ldots$.

 The goal of our study is to give estimates on the law of $S$ by observing the paths $\rP$ and its cumulated qualities $b(\rP)$. Note that,   $(\rP_{n}, b(\rP_{n}))_{n\in\N}$ is  in general not a sufficient statistic for $S$, i.e.~we can not recover the distribution of $S$ by only observing realizations of $(\rP, b(\rP))$, as we see in the next remark

\begin{remark}\label{rem:not_sufficient}
Let us consider the case $r=1$ and $c=2$. Let $S(1,1) \sim\normal(0,1)$ and  
 $S(1,2) \sim\normal(1,1)$ and define  $\tilde S(1,1):= S(1,1)+1$ and  
 $\tilde S(1,2):= S(1,2)-1$. Then for any given path $\path$ we have that
$\sum_{j= 1}^{2} S(p_{j}, j)=  \sum_{j= 1}^{2} \tilde S(p_{j}, j)$. Hence, the statistic $(\path_{n}, b(\path_{n}))_{n\in \N}$ does not allow us to distinguish between $S$ and $\tilde S$. 
\end{remark}

\begin{ex}[Binary errors]
The matrix $S$ consists of  independent Bernoulli random variables $S(i,j)$. The value $1$ of this Bernoulli may encode a defect and hence $b(\path)$ counts the number of defects of the end product.
\end{ex}

\begin{ex}[Gaussian quality]\label{ex:gaussian}
The matrix $S$ consists of  independent Gaussian  random variables $S(i,j)$. The end quality $b(\path)$ is then distributed as a mixture of Gaussian random variables.
 \end{ex}

Given a sequence of realizations $(\vec{p}_{k})_{k\in[n]}$ of $\rP$, we define the following matrices that at are the core of our analysis:
\begin{equation*}
B^{(n)} := \sum_{k=1}^{n} b(\vec{p}_{k}) V_{\vec{p}_{k}},~
V^{(n)} := \sum_{k=1}^{n} V_{\vec{p}_{k}}, n\geq 1.
\end{equation*}
The value $B^{(n)}(i,j)$ is the sum of all cumulated qualities of paths containing node $\ij$, whereas $V^{(n)}(i,j)$ just counts the number of times node $\ij$ was used.
We define the \textbf{sample mean matrix} as the sample mean quality matrix  :
$$
T^{(n)} := (T^{(n)}(i,j))_{i\in[r],\, j\in [c]}, 
\mbox{ where }
T^{(n)}(i,j)
:=
\begin{cases}
\frac{B^{(n)}(i,j)}{V^{(n)}(i,j)},&\quad\text{if } V^{(n)}(i,j) \neq 0 ;\\
\text{0, } &\quad\text{otherwise}.\\
\end{cases}
$$

The corresponding {\bf sample variance matrix} $\Sigma^{(n)}$ is defined by
\begin{equation*}
\Sigma^{(n)}(i,j):= \begin{cases}
\frac{1}{V^{(n)}(i,j)} \sum_{k=1}^{n} \left(b(\vec{p}_{k}) V_{\vec{p}_{k}}(i,j) - T^{(n)}(i,j)\right)^{2},&\quad\text{if } V^{(n)}(i,j) \neq 0 ;\\
\text{0, } &\quad\text{otherwise}.\\
\end{cases}
\end{equation*}

\section{Results}\label{section:Results}
Denote $D^{(n)}=(\rP_{i})_{i\in[n]}$ the multi-set\footnote{We use a multi-set since we  need to keep track of the multiplicity of the  paths.} or sequence of all paths up to time $n$ and let $$D^{(n)}_{(i,j)} := \{ \path \in D^{(n)} \, : \, \path[j] = i \}$$
	be the  multi-set of all paths up to index $n$ that go through the node $(i,j)$; we note that we can recover $V^{(n)}(i,j)$ through $D^{(n)}_{(i,j)}$ by  $V^{(n)}(i,j) = |D^{(n)}_{(i,j)}|$.
	
In general it is not possible to estimate the mean quality matrix $\E[S]$, see Remark \ref{rem:not_sufficient}.  However,  it is possible to identify nodes with higher or lower quality mean or variance in each column.
	
\begin{thm}\label{thm:lln}
Let $(i,j), (i', j) \in [r] \times [c]$, then 
$$
	T^{(n)}(i,j) - T^{(n)}(i',j)\xrightarrow[n\rightarrow \infty]{a.s.} \E[S(i,j)] - \E[S(i',j)]
$$
and 
$$
	\Sigma^{(n)}(i,j) - \Sigma^{(n)}(i',j)\xrightarrow[n\rightarrow \infty]{a.s.} \V[S(i,j)] - \V[S(i',j)].
$$
\end{thm}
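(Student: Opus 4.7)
The plan is to apply the strong law of large numbers (SLLN) separately to the numerator and the denominator of $T^{(n)}(i,j)$, and then exploit a decomposition of $b(\rP)$ that isolates the contribution of the $j$th column.

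Since the pairs $(\rP_k, S_k)$ are i.i.d., the summands $V_{\rP_k}(i,j)\,b(\rP_k)$ and $V_{\rP_k}(i,j)$ form i.i.d.\ sequences with finite first moment (finiteness follows from Assumption~1, since $b(\rP)$ is a sum of $c$ entries of $S$). The SLLN then yields
\begin{equation*}
\frac{B^{(n)}(i,j)}{n} \xrightarrow[n\to\infty]{\mathrm{a.s.}} \E\bigl[V_{\rP}(i,j)\, b(\rP)\bigr],
\qquad
\frac{V^{(n)}(i,j)}{n} \xrightarrow[n\to\infty]{\mathrm{a.s.}} \frac{1}{r_j} > 0 .
\end{equation*}
In particular $V^{(n)}(i,j)\to\infty$ a.s., so for $n$ large enough the ratio $T^{(n)}(i,j)$ is well defined, and the ratio form of the SLLN gives $T^{(n)}(i,j)\to \E[b(\rP)\mid \rP[j]=i]$ almost surely.

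The key observation is the decomposition $b(\rP)=S(\rP[j],j)+R^{(j)}(\rP)$ with $R^{(j)}(\rP):=\sum_{j'\neq j} S(\rP[j'],j')$. Because the column choices $\rP[j']$ are mutually independent (paths are uniform) and $S(i,j)$ is independent of every $S(\cdot,j')$ with $j'\neq j$, conditioning on $\rP[j]=i$ fixes the first summand to $S(i,j)$ without altering the law of the second, so
\begin{equation*}
\E[b(\rP)\mid \rP[j]=i] \;=\; \E[S(i,j)] + M_j,
\end{equation*}
where $M_j:=\E[R^{(j)}(\rP)]$ depends on $j$ but not on $i$. Subtracting the limits at $(i,j)$ and $(i',j)$ cancels $M_j$ and gives the first claim.

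For the variance, apply SLLN to the squared observations $V_{\rP_k}(i,j)\,b(\rP_k)^2$; the first moment of each is bounded by $\E[b(\rP)^2]<\infty$ (Assumption~1, via Cauchy--Schwarz). Expanding the square in the definition of $\Sigma^{(n)}(i,j)$ (viewed as the sample variance of $b$ restricted to the paths through $(i,j)$) reduces it to $V^{(n)}(i,j)^{-1}\sum_k V_{\rP_k}(i,j)\,b(\rP_k)^2 - T^{(n)}(i,j)^2$, which by the same SLLN/ratio argument converges a.s.\ to $\V[b(\rP)\mid \rP[j]=i]$. The same conditional-independence calculation as above then evaluates this to $\V[S(i,j)]+\V[R^{(j)}(\rP)]$, whose second summand once again depends only on $j$ and cancels in the difference. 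The only non-routine bit is this conditional-independence computation isolating the $i$-independent offsets $M_j$ and $\V[R^{(j)}(\rP)]$; everything else reduces to repeated use of the SLLN together with Slutsky's theorem.
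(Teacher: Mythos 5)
Your proposal is correct and follows essentially the same route as the paper: the strong law of large numbers applied to numerator and denominator (with Slutsky/continuous mapping) gives convergence of $T^{(n)}(i,j)$ and $\Sigma^{(n)}(i,j)$ to the conditional mean and variance of $b(\rP)$ given $\rP[j]=i$, and the decomposition $b(\rP)=S(\rP[j],j)+R^{(j)}(\rP)$ shows these equal $\E[S(i,j)]$ and $\V[S(i,j)]$ plus column-dependent offsets that cancel in the differences. You merely spell out explicitly the ``elementary calculations'' (the $i$-independent terms $M_j$ and $\V[R^{(j)}(\rP)]$, the paper's $A_j$) that the paper leaves implicit.
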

\begin{proof} Using twice the law of large numbers and the continuous mapping theorem we obtain
\begin{eqnarray*}
T^{(n)}(i,j) 
&= &
\frac{1}{|D^{(n)}_{(i,j)}|} 
\sum_{\vec{p} \in D^{(n)}_{(i,j)} }
b(\vec{p}) \cr
&=& \frac{n}{|D^{(n)}_{(i,j)}|} \frac1n
\sum_{\vec{p} \in D^{(n)}_{(i,j)} }
b(\vec{p}) \cr
&\xrightarrow[n\rightarrow \infty]{a.s.}& \frac{1}{\P(\rP[j]=i)} \E[b(\rP); \rP[j]=i'] = \E\left[b(\rP) \, | \, \rP[j]=i\right].
\end{eqnarray*}
In the same way
\begin{equation*}
T^{(n)}(i',j) 
= 
\frac{1}{|D^{(n)}_{(i',j)}|} 
\sum_{\vec{p} \in D^{(n)}_{(i',j)} }
b(\vec{p})
\xrightarrow[n\rightarrow \infty]{a.s.}
\E\left[b(\rP) \, | \, \rP[j]=i'\right].
\end{equation*}
Using the assumption that the paths are chosen uniformly and the definition of $b(\rP)$, we obtain the first part of the theorem from
\begin{eqnarray*}
\E\left[b(\rP)| \, \rP[j]=i\right] - \E\left[b(\rP) \, | \, \rP[j]=i'\right] &=& \frac{\E\left[b(\rP);  \rP[j]=i\right]- \E\left[b(\rP); \rP[j]=i'\right] }{\P(\rP[j]=i)} \cr
&=& \frac{\E\left[S(i,j);  \rP[j]=i\right]- \E\left[S(i',j); \rP[j]=i\right] }{\P(\rP[j]=i)} \cr
&=& \E[S(i,j)] - \E[S(i',j)].
\end{eqnarray*}
For the second part of the theorem, we use the law of large numbers and the continuous mapping theorem to get that
\begin{eqnarray*}
\Sigma^{(n)}(i,j) 
&=&  
\left(\frac{1}{|D_{ij}^{(n)}|} \sum_{\vec{p} \in D_{ij}^{(n)} }  b^2(\vec{p})
\right)
-  \left(
T^{(n)}(i,j)
\right)^{2}
\cr
&\xrightarrow[n\rightarrow \infty]{a.s.}&
\E\left[b(\rP)^{2}| \, \rP[j]=i\right] -\E\left[b(\rP)| \, \rP[j]=i\right]^{2}.
\end{eqnarray*}
Using the definition of  $b(\path)$ we deduce  with elementary calculations that
$$
\E\left[b(\rP)^{2}| \, \rP[j]=i\right] -\E\left[b(\rP)| \, \rP[j]=i\right]^{2}
=
A_j
+
\var{S(i,j)}
$$
where $A_j$ is a quantity that only depend on the column $j$. Applying this identity for $(i,j)$ and $(i',j)$ we obtain that
$$
	\Sigma^{(n)}(i,j) - \Sigma^{(n)}(i',j)
	\xrightarrow[n \rightarrow \infty]{a.s.}
	\var{S(i,j)} - \var{S(i',j)}.
$$
\end{proof}

\subsection{Asymptotic distribution}
The sum  $$\sum_{\vec{p} \in D^{(n)}_{(i,j)} }
b(\vec{p})=\sum_{k=1}^{n}  b(\rP_{k}) \1\{ \rP_{k}[j]=i\} $$ can be interpreted as the sum of random variables  appearing in an acception-rejection method. More precisely, we start with $k=1$ and consider $(\rP_{k}, b(\rP_{k}))$. If  $\rP_{k}[j]=i$ we set $Y:=b(\rP_{k})$ and stop, otherwise we increase $k$ and repeat until $\rP_{K}[j]=i$ for the first $K$. Now, for $y\in \R$,
\begin{equation*}
\P(  Y\leq y)= \sum_{k=1}^{\infty} \P(b(\rP_{k}) \leq y  \, | \, K=k) \P(K=k)=  \P(b(\rP_{1}) \leq y  \, | \, \rP_{1}[j]=i).
\end{equation*}
In other words, the distribution of $Y$ equals the distribution of $b(\rP_{1})$ conditioned on $\rP_{1}[j]=i$. Iterating this acception-rejection method we  see that $|D^{(n)}_{(i,j)}|$ describes the number of acceptions using $(\rP_{k}, b(\rP_{k}))$, $1\leq k \leq n$. Hence the estimator $T^{(n)}$ has the same distribution as 
\begin{equation*}
\frac{1}{|D^{(n)}_{(i,j)}|} 
\sum_{k=1}^{|D^{(n)}_{(i,j)}|} Y_{k},
\end{equation*}
where $Y_{k}, k\in\N,$ is a sequence of  i.i.d.~random variables distributed as $b(\rP_{1})$ conditioned on $\rP_{1}[j]=i$.
Finally, Anscombe's theorem, \cite[Theorem 1.3.1]{Gu:09}, implies that
\begin{equation*}
\sqrt{|D^{(n)}_{(i,j)}|} \left(T^{(n)}(i,j) -  \mu_{i,j}\right) \xrightarrow[n\rightarrow \infty]{\mathcal{D}} \normal(0,\sigma_{i,j}^{2}),
\end{equation*}
where
\begin{equation*}
\mu_{i,j}:= \E\left[b(\rP) \, | \, \rP[j]=i\right] \mbox{ and } \sigma_{i,j}^{2}:=\E\left[b(\rP)^{2}| \, \rP[j]=i\right] -\E\left[b(\rP)| \, \rP[j]=i\right]^{2}.
\end{equation*}
For the variance we assume that the entries of $S$ have finite forth moments.  The estimator for the variance is
\begin{equation*}
\Sigma^{(n)}(i,j)= \frac{1}{|D_{(i,j)}|} \sum_{\vec{p} \in D_{(i,j)}^{(n)} } \left( b(\vec{p})-  \frac{1}{|D_{(i,j)}^{(n)}|}  \sum_{\vec{p} \in D_{(i,j)}^{(n)}} b(\vec{p}) \right)^{2}.
\end{equation*}
Since the distribution of $\Sigma^{(n)}(i,j)$ does not change if we replace $b(\vec{p})$ by $b(\vec{p})-\mu_{i,j}$ we can assume that $\mu_{i,j}=0$. Moreover,
\begin{equation*}
\Sigma^{(n)}(i,j)= \frac{1}{|D_{(i,j)}|} \sum_{\vec{p} \in D_{(i,j)}^{(n)} }  b(\vec{p})^{2}-  \left(T^{(n)}(i,j)\right)^{2}.
\end{equation*}
We have that $T^{(n)}(i,j)$ converges almost surely to $0$ and $\sqrt{D_{(i,j)}^{(n)}} T^{(n)}(i,j)$ converges in distribution to $\normal(0,\sigma_{i,j}^{2})$.  Slutzky's theorem implies that $\sqrt{D_{(i,j)}^{(n)}} T^{(n)}(i,j)^{2}$ converges in distribution to $0$ and therefore also in probability to $0$.
Using  Slutzky's theorem and Anscombe's theorem as above we obtain that
\begin{eqnarray*}
\sqrt{D_{(i,j)}^{(n)}} \left( \Sigma^{(n)}(i,j)-\sigma_{i,j}^{2}\right) & =&  \sqrt{D_{(i,j)}^{(n)}}\left(  \frac{1}{|D_{(i,j)}|} \sum_{\vec{p} \in D_{(i,j)}^{(n)} }  b(\vec{p})^{2}  -\sigma_{i,j}^{2}\right) \cr
&\xrightarrow[n\rightarrow \infty]{\mathcal{D}}& \normal(0,\tau_{i,j}^{2}),
\end{eqnarray*}
 with $\tau_{i,j}^{2}=\E\left[ (b(\rP)-\mu_{i,j})^{4} \, | \, \rP[j]=i\right]- \E\left[ (b(\rP)-\mu_{i,j})^{2} \, | \, \rP[j]=i\right]^{2}.$

\subsection{Pairwise and multiple comparison}
The asymptotic distributions of the estimators $T^{(n)}$ and $\Sigma^{(n)}$  justify that pairwise comparison of the mean quality and the variance of quality can be done using standard tests, e.g.~$t$-test, Fisher tests and Bartlett tests. We sketch only  the construction of a test statistic  to compare the means of two nodes under the hypothesis that they have the same variance. However, it is standard to extent this result to the case where the variances are not equal and to the estimators of the variance differences.  More details on possible applications are given in Section \ref{Examples}.
We have for $n$ sufficiently large 
\begin{equation*}
\frac{\sqrt{|D_{(i,j)}^{(n)}|}}{\sigma_{i,j}} (T^{(n)}(i,j) - \mu_{i,j}) \sim \normal(0, 1)
\mbox{ and } \frac{\sqrt{|D_{i',j}^{(n)}|}}{\sigma_{i',j}} (T^{(n)}(i',j) - \mu_{i',j}) \sim \normal(0, 1).
\end{equation*}

\noindent Under the hypothesis that  $\mu_{i,j}=\mu_{i',j}$ and $\sigma=\sigma_{i,j}=\sigma_{i',j}$ one obtains that  for $n$ sufficiently large 
\begin{equation*}
\left(\frac{|D_{i,j}^{(n)}| \cdot |D_{i',j}^{(n)}|}{|D_{i,j}^{(n)}|+|D_{i',j}^{(n)}|} \right)^{1/2} 	\frac{T^{(n)}(i,j) - T^{(n)}(i',j)}{\sigma} \sim \normal(0,1).
\end{equation*}
Replacing $\sigma$ by its estimator 	
\begin{equation*}
\hat \sigma = \left(    \frac1{|D_{i,j}^{(n)}|+|D_{i',j}^{(n)}|} \left(|D_{i,j}^{(n)}| \Sigma^{(n)}(i,j)+ |D_{i',j}^{(n)}| \Sigma^{(n)}(i',j)\right) \right)^{1/2},
\end{equation*}
we find that
\begin{equation*}
\left(\frac{|D_{i,j}^{(n)}| \cdot |D_{i',j}^{(n)}|}{|D_{i,j}^{(n)}|+|D_{i',j}^{(n)}|} \right)^{1/2} 	\frac{T^{(n)}(i,j) - T^{(n)}(i',j)}{\hat \sigma}
\end{equation*}
can be approximated by a standard normal distribution for $n$ sufficiently large.

The identification of nodes with bad quality boils down to multiple comparisons.   For a given column $j$ with $r_{j}$ nodes we define a vector $X=(X_{1},\ldots, X_{r_{j}-1})$ of test statistics
\begin{equation*}
X_{k}= T^{(n)}(k,j)- T^{(n)}(r_{j}, j)\quad k\in\{1,\ldots, r_{j}-1\}.
\end{equation*}
The vector $X$ satisfies (asymptotically) the positive regression dependency. 
Therefore, we suggest  the Benjamini-Yekutieli method, see \cite{BeYe}, to control the false discovery rate.

\section{Examples}\label{Examples}

\subsection{Wafer production}
Our study was motivated by a root-cause analysis in the wafer fabrication. Wafer fabrication is in general a procedure of many repeated sequential processes.  For instance, a simplified illustration consists of $12$ subsequent fabrication steps, see \cite{wiki:wafer}, where intermediate measurement of qualities  are not a feasible. In our concrete examples we treated up to $30$ different steps and more than $90$ machines. Unfortunately,  since our industrial partner  insists on the fulfillment of an NDA we are not allowed to publish any more information about  the project. Probably for the same reasons, it was impossible for us to find public available data on other industrial projects. 

\subsection{Simulations}
We consider the DAG network as given in Figure \ref{fig:DAG1} and consider  Gaussian qualities as described in Example \ref{ex:gaussian}. The matrix $S$ consists in this case of independent Gaussian random variables. In this simulation we consider the distribution that is characterized by
\begin{equation}
\E [S] := \begin{bmatrix}
  0 & 0 & 0 & 0 \\ 
  0 & 2 & 0 & 0 \\ 
  0 & 0 & * & 0 \\ 
  0 & * & * & 0 \\ 
  \end{bmatrix}\mbox{ and }
  \V[S] := \begin{bmatrix}
  1 & 1 & 4 & 1 \\ 
  1 & 1 & 1 & 1 \\ 
  1 & 1 & * & 1 \\ 
  1 & * & * & 1 \\ 
  \end{bmatrix},
\end{equation}
where the $*$ are placeholders for the machines that do not exist. We simulate $n=200$ observations and obtain the following results for the estimators $T$ and $\Sigma$ (rounded to two decimals):
\begin{equation} T=
\begin{bmatrix}
  0.60 & -0.03 & 0.75 & 0.70 \\ 
  0.25 & 2.06 & 0.40 & 0.89 \\ 
  0.74 & -0.10 &  * & 0.38 \\ 
  0.69 & *  & * & 0.14 \\ 
  \end{bmatrix}, \quad
\hfill
\Sigma= \begin{bmatrix}
  6.46 & 5.62 & 8.92 & 7.12 \\ 
  8.76 & 6.13 & 5.38 & 4.89 \\ 
  4.97 & 6.58 & *  & 8.41 \\ 
  7.37 &  * & * & 8.14 \\ 
  \end{bmatrix}.
\end{equation}
While the difference in mean in the second column seems to be obvious, the difference of the variances in the third  column might be overlooked and differences in the variance in the first and last columns could be suspected.

We perform pairwise $t$-tests for each column with Benjamini-Yekutieli adjustment.  While in all but the second column no statistically significant difference is detected, the difference in mean of the second machine in column $2$ is detected with a $p$-value of $9.8 \cdot 10^{-6}$.  
The Bartlett test does not find any differences in the variances in columns $1,2,$ and $4$. However, the difference of variance in column $3$ seems to be statistically significant with a $p$-value of $0.01$. The number of observations of $200$  is rather small compared to the number of different paths, which is $96$. This explains the fact that the matrix $\Sigma$ does not reflect the correct differences of variances; in particular we have in the third column that $8.92-5.38=3.54 > 3 = 4-1$. In order to demonstrate the convergence of the estimators we calculate the estimators for $n=10000$:
\begin{equation} T=
\begin{bmatrix}
  0.63 & 0.02 & 0.62 & 0.65 \\ 
  0.62 & 1.99 & 0.68 & 0.59 \\ 
  0.66 & -0.06 & * & 0.69 \\ 
  0.70 & * & * & 0.67 \\ 
  \end{bmatrix}, \quad
\hfill
\Sigma= \begin{bmatrix}
  6.32 & 5.47 & 7.82 & 6.44 \\ 
  6.28 & 5.42 & 4.85 & 6.03 \\ 
  6.26 & 5.39 &  * & 6.23 \\ 
  6.44 &  * & * & 6.59 \\ 
  \end{bmatrix}.
\end{equation}

\subsubsection*{Acknowledgment}
The authors wish to thank Alessandro  Chiancone, Herwig Friedl,  J\'er\^{o}me Depauw, and Marc Peign\'e for stimulating discussing during this project.   A.~G.~acknowledges  financial  support  from  the  Austrian  Science  Fund  project  FWF  P29355- N35.  Grateful acknowledgement is made for hospitality from TU-Graz  where the research was carried out during visits of S.~M. 

\bibliographystyle{abbrv}
\bibliography{bib_rca}

\begin{thebibliography}{10}

\bibitem{BaGu:09}
J.~Bang-Jensen and G.~Gutin.
\newblock {\em Digraphs}.
\newblock Springer Monographs in Mathematics. Springer-Verlag London, Ltd.,
  London, second edition, 2009.
\newblock Theory, algorithms and applications.

\bibitem{BeYe}
Y.~Benjamini and D.~Yekutieli.
\newblock The control of the false discovery rate in multiple testing under
  dependency.
\newblock {\em Ann. Statist.}, 29(4):1165--1188, 2001.

\bibitem{BoGeWeAr:10}
D.~{Bohme}, M.~{Geimer}, F.~{Wolf}, and L.~{Arnold}.
\newblock Identifying the root causes of wait states in large-scale parallel
  applications.
\newblock In {\em 2010 39th International Conference on Parallel Processing},
  pages 90--100, 2019.

\bibitem{wiki:wafer}
W.~contributors.
\newblock {W}afer fabrication, 2019.

\bibitem{DrPu:96}
N.~R. Draper and F.~Pukelsheim.
\newblock An overview of design of experiments.
\newblock {\em Statist. Papers}, 37(1):1--32, 1996.

\bibitem{FeRuSch:97}
D.~G. Feitelson, L.~Rudolph, U.~Schwiegelshohn, K.~C. Sevcik, and P.~Wong.
\newblock Theory and practice in parallel job scheduling.
\newblock {\em Lecture Notes in Computer Science}, pages 1--34, 1997.

\bibitem{Gu:09}
A.~Gut.
\newblock {\em Stopped random walks}.
\newblock Springer Series in Operations Research and Financial Engineering.
  Springer, New York, second edition, 2009.
\newblock Limit theorems and applications.

\bibitem{HaLe:89}
S.~T. Hackman and R.~C. Leachman.
\newblock A general framework for modeling production.
\newblock {\em Management Science}, 35(4):478--495, Apr 1989.

\bibitem{KlZe:08}
C.~Kleiber and A.~Zeileis.
\newblock {\em Applied Econometrics with R}.
\newblock Springer New York, 2008.

\bibitem{KoPh:01}
V.~S. Kouikoglou and Y.~A. Phillis.
\newblock {\em Hybrid Simulation Models of Production Networks}.
\newblock Springer US, 2001.

\bibitem{MoPeVi:01}
D.~C. Montgomery, E.~A. Peck, and G.~G. Vining.
\newblock {\em Introduction to linear regression analysis}.
\newblock Wiley Series in Probability and Statistics: Texts, References, and
  Pocketbooks Section. Wiley-Interscience, New York, third edition, 2001.

\bibitem{MoDoBe:14}
D.~Mourtzis, M.~Doukas, and D.~Bernidaki.
\newblock Simulation in manufacturing: Review and challenges.
\newblock {\em Procedia CIRP}, 25:213 -- 229, 2014.
\newblock 8th International Conference on Digital Enterprise Technology - DET
  2014 Disruptive Innovation in Manufacturing Engineering towards the 4th
  Industrial Revolution.

\bibitem{Pe:09}
J.~Pearl.
\newblock {\em Causality}.
\newblock Cambridge University Press, 2009.

\bibitem{Po:18}
S.~Popov.
\newblock The {T}angle, version April 30, 2018.

\bibitem{Schu:05}
M.~Schulz.
\newblock Extracting critical path graphs from mpi applications.
\newblock In {\em 2005 IEEE International Conference on Cluster Computing},
  pages 1--10, 2005.

\bibitem{Se:18}
D.~Selvamuthu and D.~Das.
\newblock {\em Introduction to Statistical Methods, Design of Experiments and
  Statistical Quality Control}.
\newblock Springer Singapore, 2018.

\bibitem{Sk:08}
S.~S. Skiena.
\newblock {\em The Algorithm Design Manual}.
\newblock Springer London, 2008.

\bibitem{Sz:15}
C.~{Szegedy}, {Wei Liu}, {Yangqing Jia}, P.~{Sermanet}, S.~{Reed},
  D.~{Anguelov}, D.~{Erhan}, V.~{Vanhoucke}, and A.~{Rabinovich}.
\newblock Going deeper with convolutions.
\newblock In {\em 2015 IEEE Conference on Computer Vision and Pattern
  Recognition (CVPR)}, pages 1--9, 2015.

\end{thebibliography}

\bigskip
\noindent\begin{minipage}{0.48\textwidth}
Abraham Gutierrez \newline
Institute of Discrete Mathematics,\newline
Graz University of Technology\newline
Steyrergasse 30,\newline
8010 Graz, Austria\newline
\texttt{a.gutierrez@math.tugraz.at}
\end{minipage}
\hfill
\begin{minipage}{0.48\textwidth}
Sebastian M\"{u}ller\newline
Aix Marseille Universit\'e\newline
CNRS,  Centrale Marseille\newline 
I2M\newline
UMR 7373\newline 13453 Marseille, France\newline
\texttt{sebastian.muller@univ-amu.fr}
\end{minipage}
\end{document}